\tikzstyle{bblock} = [rectangle, rounded corners, minimum width=0.5in, minimum height = 0.2in, text centered, draw=red!30, fill=red!5]
\tikzstyle{Bblock} = [rectangle, rounded corners, minimum width=1.65in, minimum height = 0.45in, text centered, draw=red!80, dashed]
\tikzstyle{arrow} = [thick, ->, >=stealth]
\tikzstyle{line} = [thick, -, >=stealth]
\tikzset{node distance = 0.9in and 0.2in}
\newcommand{\multiline}[1]{%
  \begin{tabularx}{\dimexpr\linewidth-\ALG@thistlm}[t]{@{}X@{}}
    #1
  \end{tabularx}
}
\newcommand{\R}{\mathbb{R}}
\newtheorem{assumption}{Assumption}
\newtheorem{lemma}{Lemma}
\newtheorem{remark}{Remark}
\newtheorem{proposition}{Proposition}
\newtheorem{corollary}{Corollary}
\newtheoremstyle{examplestyle}  
  {\topsep}   
  {\topsep}   
  {\normalfont}
  {}          
  {\bfseries} 
  {.}         
  {.5em}      
  {}          
\theoremstyle{examplestyle} 
\newtheorem{example}{Example}[section] 
\title{\LARGE \bf
Monte Carlo Grid Dynamic Programming: Almost Sure Convergence and Probability Constraints}
\author{Mohammad S. Ramadan$^1$, Ahmad Al-Tawaha$^2$, Mohamed Shouman$^3$, Ahmed Atallah$^4$, Ming Jin$^2$  

\thanks{$1:$ Mathematics and Computer Science Division, Argonne National Laboratory, Lemont, IL 60439, USA,  {\tt\small mramadan@anl.gov.}, $2:$ Departement of Electrical and Computer Engineering, Virginia Tech, Blacksburg, VA. $3:$ Research center of the smart vehicle, Toyota Technological Institute (TTI), Nagoya, Aichi, Japan, $4:$ Department of Mechanical \&\ Aerospace Engineering, University of California, San Diego, La Jolla CA 92093-0411, USA.}}
\begin{document}
\maketitle
\thispagestyle{empty}
\pagestyle{empty}

\begin{abstract}
Dynamic Programming (DP) suffers from the well-known ``curse of dimensionality'', further exacerbated by the need to compute expectations over process noise in stochastic models. This paper presents a Monte Carlo-based sampling approach for the state space and an interpolation procedure for the resulting value function, dependent on the process noise density, in a "self-approximating" fashion, eliminating the need for ordering or set-membership tests. We provide proof of almost sure convergence for the value iteration (and consequently, policy iteration) procedure. The proposed meshless sampling and interpolation algorithm alleviates the burden of gridding the state space, traditionally required in DP, and avoids constructing a piecewise constant value function over a grid. Moreover, we demonstrate that the proposed interpolation procedure is well-suited for handling probabilistic constraints by sampling both infeasible and feasible regions. The curse of dimensionality cannot be avoided, however, this approach offers a practical framework for addressing lower-order stochastic nonlinear systems with probabilistic constraints, while eliminating the need for linear interpolations and set membership tests. Numerical examples are presented to further explain and illustrate the convenience of the proposed algorithms.
\end{abstract}

\section{Introduction}

Optimal control in stochastic environments underpins various disciplines, ranging from robotics \cite{brudigam2021stochastic} to financial engineering \cite{soner2004stochastic}. In these domains, making accurate decisions under uncertainty is not merely advantageous but imperative for ensuring operational efficacy and system safety. Dynamic Programming (DP) \cite{bellman2015applied}, a foundational framework for framing and solving optimal control problems, suffers from the curse of dimensionality, particularly when confronting stochastic systems with noise processes \cite[ch~25]{doucet2001sequential}. This computational challenge has historically limited the applicability of DP, necessitating innovative approaches to mitigate its impact.


Efforts to address these challenges have led to the development of various methods. Among these, the approximation of the DP algorithm through stochastic approximation procedures \cite{chau2014overview, soner2004stochastic, watkins1992q} has paved the way for Approximate Dynamic Programming (ADP) \cite{si2004handbook} and Reinforcement Learning (RL) \cite{recht2019tour}. These approaches approximate the underlying value functions or policies, thereby circumventing the exhaustive computation across the entire state space. Concurrently, Model Predictive Control (MPC) \cite{keerthi1988optimal} offers a pragmatic solution by solving a finite-horizon optimal control problem in an open-loop fashion at each timestep \cite{mayne2014model}.


Despite these advancements, DP retains its relevance both as a conceptual foundation for algorithm development \cite{bertsekas2019reinforcement} and as a direct solution method for specific problems, such as Hamilton-Jacobi Reachability analysis \cite{herbert2021scalable}. The value function derived from DP offers critical insights into the system's performance, guiding control design and decision-making towards more favorable system states \cite{bertsekas1995dynamic}. Moreover, incorporating probabilistic constraints enhances DP's applicability in risk-sensitive environments \cite{meyn2008control}, underscoring its importance in safety-critical decision-making. However, integrating probabilistic constraints into DP formulations poses significant computational challenges, as it requires considering the joint probability distribution of the system's state variables and ensuring that constraints are satisfied with a desired probability level \cite{chow2019lyapunov}. This added complexity further exacerbates the curse of dimensionality and limits the scalability of traditional DP techniques. Additionally, computational advances have extended the reach of optimized solvers \cite{bui2022optimizeddp}. This efficiency highlights DP's continued relevance in addressing lower-order stochastic nonlinear systems, which typically do not admit elegant optimal control solutions.


This paper addresses the inherent challenge of optimizing control strategies in stochastic environments with probabilistic constraints. We focus on developing a methodology that respects the operational constraints and navigates the computational complexities traditionally associated with DP. To this end, we present a Monte-Carlo-based sampling algorithm of the state space, with an interpolation procedure of the resulting value function and the corresponding feedback law that is dependent on the process noise density. In \cite{rust1997using}, a similar randomization and interpolation procedure is presented with its proof of convergence and upper bounds on its computational complexity. Different from \cite{rust1997using}, we relax the sampling restriction and allow for distributions other than uniform,  enabling adaptive gridding via importance sampling. This can be vital, especially when certain regions in the state space require more exploration relative to others. Furthermore, our derivation extends naturally to cover the case of probabilistic constraints.

The paper follows a path analogous to that of \cite{bertsekas1975convergence}, except that instead of refining the grid to yield a better estimate, our random approach relies on the Borel-Cantelli lemma to reach similar convergence results, but in an "almost sure" sense. Moreover, the convergence results in this paper are based on estimating a value function represented by a self-approximating interpolation scheme, as described by \cite{rust1997using}, not a piecewise constant over the grid as in \cite{bertsekas1975convergence}. We conclude the paper with numerical examples to further explain our algorithms and illustrate their practical convenience.

\section{Finite-Horizon}
 \label{Section:finitehorizon}
 Consider a finite-horizon optimal control problem aimed at minimizing the finite-horizon cost functional
 \begin{align*}
     J(x_0,u_{0:T-1})&=\textbf{E}_{x_{1:T}} \left \{ \sum_{k=1}^{T-1} \ell_k(x_k,u_k) +  \ell_N(x_T) \mid x_0,u_{0:T-1} \right \},
 \end{align*}
 over the space of control sequences $u_{0:T-1}= \{u_k,\,k\in \mathbb{I}_{T-1}\}$, for every initial state $x_0$, where the notation $\mathbb{I}_{T-1}=\{0,1, \ldots, T-1\}$. The expectation is taken over the state sequence $x_{1:T}$, which, its distribution is characterized by the dynamics\footnote{Our forthcoming derivations can be extended beyond the additive noise case.}:
\begin{align}
     x_{k+1}=f(x_k,u_k)+w_k,\quad x_0\in\mathbb{X}\subset\R^{r_x} \text{ is given,} \label{FDPE:Dynamical System}
 \end{align}
where $\{w_k\}_k$ is independent and identically distributed, each with a density $\mathcal{W}$. Furthermore, the control inputs are subject to constraints: $u_k \in U\left(x_k\right) \subseteq \mathbb{R}^{r_u},$ for all $x_k \in \mathbb X$, where
\begin{align}
U(x_k)=\left\{u_k \in \mathbb{U}\subset \mathbb{R}^{r_u} \mid \operatorname{Pr}\left(x_{k+1} \in \mathbb{X} \mid x_k, u_k\right) = 1 \right\}, \label{forward_invariance_constraint}
\end{align}
and $\mathbb U$ being the admissible control space. The $U(x_k)$ set is assumed non-empty for all $x_k \in \mathbb{X}$. This constraint ensures that the resulting closed-loop system renders $\mathbb X$ forward invariant, that is, $x_k \in \mathbb X$, for all $k$, starting from $x_0 \in \mathbb X$.

 The DP algorithm corresponding to this optimal control problem is
 \begin{equation}
 \begin{aligned}
     V_T(x_T)&=\ell_T(x_T),\, x_T \in \mathbb{X}_T\subset \R^{r_x},\\
     V_{k}(x_k)&= \min_{u_k \in U(x_k)} \{\ell_k(x_k,u_k) + \\
     &\hskip 20mm \textbf{E}_{x_{k+1}} [V_{k+1}(x_{k+1})\mid x_k,u_k]\},\label{FDPE}\\
     &x_k\in \mathbb{X},\,u_k\in U(x_k),\, k\in \mathbb{I}_T,
 \end{aligned}
 \end{equation}
\begin{assumption} \label{Assumption1}
\begin{enumerate}
    \item The set $\mathbb{X} \subset \R^{r_x}$ is compact, $\mathbb U$ is finite, and $U(x_k)$ is non-empty for all $x_k \in \mathbb X$.
    \item The functions $f(\cdot,u_k)$ and $\ell_k(\cdot,u_k)$ are Lipschitz almost everywhere (with the respect to the corresponding Lebesgue measure over their first argument) on $\mathbb{X}$, for all $u_k \in U(x_k)$.
    \item $\ell_k(x_k,u_k) \geq 0$ for all $x_k \in \mathbb{X}$ and $u_k \in U(x_k)$.  
\end{enumerate}
\end{assumption}
 
These assumptions are standard in dynamic programming, ensuring feasibility and stabilizability.

\vskip 3mm
\begin{remark}
The assumption of having $U(x_k)$ non-empty for all $x_k\in \mathbb X$, in Assumption~\ref{Assumption1}, which translates into the necessity of having $\mathcal{W}(\cdot-f(x_{k-1},u_{k-1}))\subset \mathbb{X}$ for all $x_{k-1}$ and some $u_{k-1}$, might not be possible in a general practical setting. For example, if the process noise is Gaussian. In this case, it might be required to reformulate the original optimal control problem, for instance, truncating and re-normalizing the densities and/or tightening the feasible set $\mathbb X$. These reformulations, existing in all grid-based DP algorithms, might introduce approximation errors, which from practical experience, can be mitigated if $\mathbb X$ is large enough.
\end{remark}

We now present our Monte Carlo based sampling algorithm to solve \eqref{FDPE}.

\begin{algorithm}[ht]
\caption{\label{algorithmDPE}Particle DP (Finite-Horizon)}
\begin{algorithmic}[0]
\Initialization

Let $\mathcal{X}$ be a density function on $\mathbb{X}$ such that $\textit{supp}(\mathcal{X}) = \mathbb{X}$, and let the particles $\{\xi^j\}_{j=1}^{N}$ be i.i.d. sample according to it. Then evaluate the weights $\{\Omega^l_T\}_{l=1}^{N}$
         \begin{align} \label{terminal_weights}
            \Omega^l_T=V_T(\xi^l)=\ell_T(\xi^l),\,l=1,\hdots,N.
        \end{align}
\EndInitialization

\For{$k=T-1$ down to $0$,}
    \begin{align}
        \Omega^l_{k}&=\min_{u_k \in U(\xi^l)}\{\ell_k(\xi^l,u_k)+\sum_{j=1}^{N} \Omega^j_{k+1} c^j(\xi^l,u_k)\},\label{FDPE_weights}\\
        &\hskip5mm l=1,2,\hdots,N, \nonumber
    \end{align}
        where 
    \begin{align}
     c^j(x_k,u_k)&=\frac{ \mathcal{M}(x_k,u_k)^j}{\sum_{i=1}^{N} \mathcal{M}(x_k,u_k)^i},
 \end{align}
    and 
    \begin{align*}
        \mathcal{M}(x_k,u_k)^j = \frac{\mathcal{W}(\xi^j-f(x_k,u_k))}{\mathcal{X}(\xi^j)}.
    \end{align*}
\EndFor
\end{algorithmic}
\end{algorithm}

We shall show next that Algorithm~\ref{algorithmDPE} can be used to generate an approximation $\tilde V_k^{N}$,
\begin{align}
    \tilde V_k^{N}(x_k)=\min_{u_k \in U(x_k)} \left \{\ell_k(x_k,u_k)+\sum_{j=1}^{N} \Omega^j_{k+1} c^j(x_k,u_k)\right \}, \label{FDPE:approx}
\end{align}
where $\Omega_{k}^{j}$ and the $c$ terms represent the ``value weights'' associated with the $j$th particle and the importance likelihoods, at time $k$, respectively. Their proper definition is established in Algorithm~\ref{algorithmDPE}. The approximation $\tilde V_k^{N}$ converges, almost surely, to $V_k$, for all $k=1,2,\hdots,T$. We first prove it for the terminal time-step and then use induction (backwards) for the remaining time-steps.

The control law (or policy) corresponding to this approximation of the value function can be found as follows:
\begin{align}
    u_k^*(x_k)=\operatorname*{arg\,min}_{u_k \in U(x_k)}\left \{\ell_k(x_k,u_k)+\sum_{j=1}^{N} \Omega^j_{k+1} c^j(x_k,u_k)\right\} \label{controller}
\end{align}

\begin{lemma}
\label{lemma:convergence T-1}
The approximation $\tilde V_{k}^{N}$ in (\ref{FDPE:approx}), when $k=T-1$, converges to $V_{T-1}$ almost surely on $\mathbb X$, as $N \to \infty$.
\end{lemma}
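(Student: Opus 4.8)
The plan is to recognize the inner sum in \eqref{FDPE:approx} as a self-normalized importance-sampling estimator of the conditional expectation appearing in \eqref{FDPE}, establish its convergence by the strong law of large numbers (SLLN) for each fixed state and control, and then lift this pointwise-in-$x$ statement to a statement holding on all of $\mathbb{X}$ using the Lipschitz structure of Assumption~\ref{Assumption1}. First I would fix $x \in \mathbb{X}$ and a control $u \in U(x)$, and use $\Omega_T^j = V_T(\xi^j) = \ell_T(\xi^j)$ to write the inner sum explicitly as the ratio
\[
\sum_{j=1}^{N} \Omega_T^j\, c^j(x,u) = \frac{\tfrac{1}{N}\sum_{j=1}^{N} \ell_T(\xi^j)\,\mathcal{W}(\xi^j - f(x,u))/\mathcal{X}(\xi^j)}{\tfrac{1}{N}\sum_{i=1}^{N} \mathcal{W}(\xi^i - f(x,u))/\mathcal{X}(\xi^i)}.
\]

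Since the $\xi^j$ are i.i.d.\ with density $\mathcal{X}$ and $\textit{supp}(\mathcal{X}) = \mathbb{X}$, the SLLN applied separately to numerator and denominator gives, almost surely,
\[
\tfrac{1}{N}\sum_{j=1}^{N} \ell_T(\xi^j)\,\frac{\mathcal{W}(\xi^j - f(x,u))}{\mathcal{X}(\xi^j)} \;\longrightarrow\; \int_{\mathbb{X}} \ell_T(\xi)\,\mathcal{W}(\xi - f(x,u))\,d\xi = \E[V_T(x_T)\mid x,u],
\]
\[
\tfrac{1}{N}\sum_{i=1}^{N} \frac{\mathcal{W}(\xi^i - f(x,u))}{\mathcal{X}(\xi^i)} \;\longrightarrow\; \int_{\mathbb{X}} \mathcal{W}(\xi - f(x,u))\,d\xi = 1,
\]
where the denominator limit equals one \emph{precisely} because $u \in U(x)$ forces $\Pr(x_{k+1}\in\mathbb{X}\mid x,u)=1$ via \eqref{forward_invariance_constraint}. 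The integrability needed for the SLLN is immediate, as the weight $\mathcal{W}(\cdot - f(x,u))/\mathcal{X}(\cdot)$ has $\mathcal{X}$-expectation equal to one and $\ell_T$ is bounded on the compact set $\mathbb{X}$. Taking the ratio and using that the denominator limit is nonzero yields a.s.\ convergence of the inner sum to the true conditional expectation. Finally, because $\mathbb{U}$, and hence each $U(x)$, is finite, the minimization runs over finitely many controls, and the minimum of finitely many a.s.\ convergent sequences converges a.s.\ to the minimum of the limits (the governing probability-one event being a finite intersection). This delivers $\tilde V_{T-1}^{N}(x)\to V_{T-1}(x)$ almost surely for each fixed $x$.

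The main obstacle is upgrading this pointwise-in-$x$ convergence to convergence \emph{on} $\mathbb{X}$, since the exceptional null set produced by the SLLN a priori depends on $x$ and $\mathbb{X}$ is uncountable. To close this gap I would fix a countable dense subset $D\subset\mathbb{X}$ and intersect the countably many probability-one events to obtain a single probability-one event on which convergence holds simultaneously for every $x\in D$; then I would invoke Assumption~\ref{Assumption1}(2) — the a.e.\ Lipschitz continuity of $f$ and $\ell_{T-1}$, together with the regularity of the density $\mathcal{W}$ — to show that $x\mapsto \tilde V_{T-1}^{N}(x)$ and $x\mapsto V_{T-1}(x)$ are (a.e.)\ Lipschitz with a modulus independent of $N$. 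Equicontinuity combined with dense pointwise convergence then extends the limit from $D$ to (a.e.)\ $x\in\mathbb{X}$ and renders the convergence locally uniform. The delicate point is controlling the $x$-dependence of the normalized weights $c^j(x,u)$ uniformly in $N$, which is where the boundedness of $\mathcal{X}$ away from zero on the compact set $\mathbb{X}$ and the Lipschitz regularity of $\mathcal{W}$ enter; alternatively, as anticipated in the introduction, one can derive a summable tail bound on the deviation $|\tilde V_{T-1}^{N}(x)-V_{T-1}(x)|$ and appeal to the Borel--Cantelli lemma to obtain the simultaneous almost-sure statement directly.
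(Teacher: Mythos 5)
Your proposal is correct and follows essentially the same route as the paper: identify the inner sum as the self-normalized importance-sampling estimate of $\textbf{E}_{x_T}[V_T(x_T)\mid x_{T-1},u_{T-1}]$, obtain its almost sure convergence, and pass the limit through the minimum over the finite control set $U(x_{T-1})$. The only differences are that you unpack the convergence the paper delegates to a citation (applying the SLLN to numerator and denominator separately, with the denominator tending to $1$ precisely because of the forward-invariance constraint \eqref{forward_invariance_constraint}), and that you explicitly flag and repair the $x$-dependence of the exceptional null set via a countable dense subset plus equicontinuity --- a point the paper's proof does not address even though the lemma is stated ``on $\mathbb{X}$''; both additions strengthen rather than alter the argument.
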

\begin{proof}
We start with the expectation term on the right-hand-side of the DP (\ref{FDPE}), when $k=T-1$,
\begin{align}
    &\textbf{E}_{x_{T}} \left \{V_{T}(x_{T})\mid x_{T-1},u_{T-1}\right \}=\nonumber\\
    &\hskip 5mm\int_{\mathbb{X}_{T}} V_{T}(x_{T})p(x_{T}\mid x_{T-1},u_{T-1})\,dx_{T}, \label{value_pred_expectation}
\end{align}
where $p(x_{T}\mid x_{T-1},u_{T-1})$ is the prediction density, and it can be expressed in terms of the process noise density $\mathcal{W}$ which is assumed given. In particular, the prediction density $$p(x_{T}\mid x_{T-1},u_{T-1}) = \mathcal{W}(x_{T}-f(x_{T-1},u_{T-1})).$$ 
Since $\{\xi^j\}$ is i.i.d. with density $\mathcal{X}$ supported by $\mathbb{X}$, if the support of the prediction density $\mathcal{W}(\cdot-f(x_{T-1},u_{T-1}))\subset \mathbb{X}$ for all $x_{T-1} \in \mathbb{X}$ and some $u_{T-1} \in U$, then the expectation \eqref{value_pred_expectation} can be expressed by a weighted average using ideas from Monte Carlo integration. 

The following sum
\begin{align*}
    \sum_{j=1}^N V_{T}(\xi^j) c^j(x_{T-1},u_{T-1})
\end{align*} 
is the self-normalizing importance sampling estimate \cite{liu2001monte} of the expectation \eqref{value_pred_expectation}, and using \eqref{terminal_weights},
\begin{align*}
    \sum_{j=1}^N V_{T}(\xi^j) c^j(x_{T-1},u_{T-1}) = \sum_{j=1}^N \Omega^j_T c^j(x_{T-1},u_{T-1}).
\end{align*} 
This estimate is unbiased and converges almost surely on $\mathbb X$ as $N\to\infty$ \cite{liu2001monte}. Therefore, 
\begin{align*}
    \ell_{T-1}(x_{T-1},u_{T-1})+\sum_{j=1}^{N} \Omega^j_{T} c^j(x_{T-1},u_{T-1})
\end{align*}
converges almost surely to 
\begin{align*}
    \ell_{T-1}(x_{T-1},u_{T-1}) + \textbf{E}_{x_{T}} [V_{T}(x_{T})\mid x_{T-1},u_{T-1}],
\end{align*}
as $N \to \infty$, for all $u_{T-1} \in U_{T-1}$. Since $U_{T-1}(x_{T-1})$ is finite, by Assumption~\ref{Assumption1}, both minimums over $U_{T-1}(\xi^l)$ are equal almost surely as well, as $N\to \infty$. Therefore, we have
\begin{align} \label{Convergence_omega_T}
\lim_{N\to\infty} \Omega^l_{T-1} = V_{T-1}(\xi^l).
\end{align}
\end{proof}

We now show that this results hold for the time-step $T-2$ as well, and then, by induction, it holds for time steps down to $0$.
\vskip 3mm
\begin{proposition} 
\label{proposition1}
The approximation $\tilde V_{T-2}^{N}$ generated by Algorithm~\ref{algorithmDPE}, in (\ref{FDPE:approx}), converges to $V_{T-2}$ almost surely over $\mathbb X$, as $N \to \infty$.
\end{proposition}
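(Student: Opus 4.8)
The plan is to follow the template of Lemma~\ref{lemma:convergence T-1}: reduce the statement to the almost sure convergence of the inner importance-sampling sum for each admissible control, and then transfer this to the minimized objective using finiteness of $U(\cdot)$. Since for a finite index set one has $|\min_{u} a_{u}-\min_{u} b_{u}|\le \max_{u}|a_{u}-b_{u}|$, it is enough to prove that, for every fixed $u_{T-2}\in U(x_{T-2})$,
\begin{align*}
\sum_{j=1}^{N}\Omega_{T-1}^{j}\,c^{j}(x_{T-2},u_{T-2})\;\longrightarrow\;\textbf{E}_{x_{T-1}}\!\left[V_{T-1}(x_{T-1})\mid x_{T-2},u_{T-2}\right]
\end{align*}
almost surely as $N\to\infty$; maximizing over the finite set $U(x_{T-2})$ then delivers $\tilde V_{T-2}^{N}\to V_{T-2}$ a.s. on $\mathbb{X}$.

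The essential difference from the base case is that the weights $\Omega_{T-1}^{j}$ are no longer the exact values $V_{T-1}(\xi^{j})$, but only their Lemma~\ref{lemma:convergence T-1} approximations. I would therefore split
\begin{align*}
\sum_{j=1}^{N}\Omega_{T-1}^{j}c^{j}=\sum_{j=1}^{N}V_{T-1}(\xi^{j})\,c^{j}+\sum_{j=1}^{N}\bigl(\Omega_{T-1}^{j}-V_{T-1}(\xi^{j})\bigr)\,c^{j}.
\end{align*}
The first sum is exactly the self-normalizing importance-sampling estimator of $\textbf{E}_{x_{T-1}}[V_{T-1}(x_{T-1})\mid x_{T-2},u_{T-2}]$, now built from the bounded measurable integrand $V_{T-1}$ (bounded since it is the value function of a nonnegative-stage-cost problem on the compact set $\mathbb{X}$). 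Hence it converges almost surely to that expectation by the identical argument used in Lemma~\ref{lemma:convergence T-1}, with $V_{T-1}$ replacing $V_{T}$. For the second sum, because the weights satisfy $c^{j}\ge 0$ and $\sum_{j}c^{j}=1$, I can bound
\begin{align*}
\left|\sum_{j=1}^{N}\bigl(\Omega_{T-1}^{j}-V_{T-1}(\xi^{j})\bigr)c^{j}\right|\le \max_{1\le j\le N}\bigl|\Omega_{T-1}^{j}-V_{T-1}(\xi^{j})\bigr|,
\end{align*}
uniformly in $(x_{T-2},u_{T-2})$.

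The main obstacle is precisely this last quantity: Lemma~\ref{lemma:convergence T-1} gives convergence $\Omega_{T-1}^{l}\to V_{T-1}(\xi^{l})$ for each fixed particle, but here I need the maximum over the growing collection $\{\xi^{j}\}_{j=1}^{N}$ to vanish, i.e.\ a convergence that is uniform over particle locations. To obtain it I would upgrade the pointwise statement of Lemma~\ref{lemma:convergence T-1} to a uniform one, $\sup_{x\in\mathbb{X}}|\tilde V_{T-1}^{N}(x)-V_{T-1}(x)|\to 0$ a.s., which dominates $\max_{j}|\Omega_{T-1}^{j}-V_{T-1}(\xi^{j})|$ since each $\xi^{j}\in\mathbb{X}$. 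The uniform upgrade follows from three ingredients: (i) equicontinuity in $x$ of the family $x\mapsto\sum_{j}V_{T}(\xi^{j})c^{j}(x,u)$, inherited from the Lipschitz regularity of $f(\cdot,u)$ in Assumption~\ref{Assumption1} together with regularity of the noise density $\mathcal{W}$ and boundedness of the importance ratio on $\mathbb{X}$; (ii) compactness of $\mathbb{X}$ and finiteness of $\mathbb{U}$; and (iii) the pointwise a.s. convergence already in hand. A standard covering / Arzel\`a--Ascoli argument, made quantitative through the Borel--Cantelli lemma by summing the tail probabilities $\Pr(|\,\cdot\,|>\varepsilon)$ over a finite $\varepsilon$-net of $\mathbb{X}$, then promotes pointwise to uniform almost sure convergence.

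Combining the two terms, $\sum_{j}\Omega_{T-1}^{j}c^{j}(x_{T-2},u_{T-2})$ converges a.s.\ to $\textbf{E}_{x_{T-1}}[V_{T-1}\mid x_{T-2},u_{T-2}]$ for each $u_{T-2}$; adding the deterministic stage cost $\ell_{T-2}(x_{T-2},u_{T-2})$ and taking the minimum over the finite set $U(x_{T-2})$ yields the claimed almost sure convergence $\tilde V_{T-2}^{N}\to V_{T-2}$, and in particular $\Omega_{T-2}^{l}\to V_{T-2}(\xi^{l})$. This same decomposition, a clean importance-sampling term plus a weight-error term controlled by uniform convergence at the previous stage, is exactly the inductive step, so it propagates backward to every $k\in\mathbb{I}_{T}$.
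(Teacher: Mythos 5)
Your proof is sound in outline but takes a genuinely different route from the paper's. The paper handles the two sources of error by introducing two separate particle counts: $N^{+}$ for the inner computation of the weights $\Omega^{j}_{T-1}$ and $N^{-}$ for the outer importance-sampling sum. It first sends $N^{+}\to\infty$ (a finite sum of terms each converging by Lemma~\ref{lemma:convergence T-1}, so no uniformity issue arises), then sends $N^{-}\to\infty$ to invoke the importance-sampling limit with the exact values $V_{T-1}(\xi^{j})$, and finally asserts that the order of the iterated limits can be interchanged via dominated convergence. Your argument instead works with a single index $N$ and splits the estimator into the ``exact-weight'' term $\sum_{j}V_{T-1}(\xi^{j})c^{j}$ plus an error term bounded by $\max_{j\le N}|\Omega^{j}_{T-1}-V_{T-1}(\xi^{j})|$, which you control by upgrading Lemma~\ref{lemma:convergence T-1} to uniform almost sure convergence on $\mathbb{X}$. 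This is arguably the more faithful analysis: the algorithm runs on the diagonal $N^{+}=N^{-}=N$, and an iterated limit in either order does not by itself yield the diagonal limit without some uniformity, which is precisely the ingredient you supply explicitly and the paper leaves implicit in its interchange-of-limits remark. What the paper's route buys is brevity, since each step is a limit of a finite sum or a textbook importance-sampling result; what your route buys is a direct handle on the joint limit and an estimate ($c^{j}\ge 0$, $\sum_{j}c^{j}=1$) that is uniform in $(x_{T-2},u_{T-2})$ and propagates cleanly through the backward induction.

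Two caveats. First, the uniform-convergence upgrade is where the real technical weight of your argument sits, and it is only sketched: the equicontinuity of $x\mapsto\sum_{j}V_{T}(\xi^{j})c^{j}(x,u)$ requires not just Lipschitzness of $f(\cdot,u)$ and regularity of $\mathcal{W}$ but also that the self-normalizing denominator $\frac{1}{N}\sum_{i}\mathcal{M}(x,u)^{i}$ stays bounded away from zero uniformly over $x\in\mathbb{X}$ for large $N$; that needs its own (Glivenko--Cantelli type) argument before the $\varepsilon$-net and Borel--Cantelli step can be run. Second, a small slip: you write ``maximizing over the finite set $U(x_{T-2})$'' where you mean minimizing; the inequality $|\min_{u}a_{u}-\min_{u}b_{u}|\le\max_{u}|a_{u}-b_{u}|$ you invoke is the right one.
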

\begin{proof}
Suppose an infinite set of particles $\{\xi^j\}_j^\infty$ is generated as in Algorithm~\ref{algorithmDPE}. Let $N^+,\,N^-$ be positive integers. In the following, for some integer $l$,
\begin{align*}
     \ell_{T-2}(x_{T-2},u_{T-2})+\sum_{j=1}^{N^-} \Omega^j_{T-1} c^j(x_{T-2},u_{T-2}),
\end{align*}
suppose $\Omega^j_{T-1}$ is evaluated using Algorithm~\ref{algorithmDPE} by the first $N^+$ particles, that is, $\{\xi^j\}_j^{N^+}$. Using Lemma~\ref{lemma:convergence T-1}, in particular \eqref{Convergence_omega_T}, and the fact that $\Omega^j_{T-1}$ is a function of $N^+$,
\begin{align*}
     &\lim_{N^+\to\infty}\left\{\ell_{T-2}(x_{T-2},u_{T-2})+\sum_{j=1}^{N^-} \Omega^j_{T-1} c^j(x_{T-2},u_{T-2})\right\}\\
     &=\ell_{T-2}(x_{T-2},u_{T-2})+\sum_{j=1}^{N^-} \lim_{N^+\to\infty} \left \{ \Omega^j_{T-1} \right \} c^j(x_{T-2},u_{T-2}),\\
     &=\ell_{T-2}(x_{T-2},u_{T-2})+\sum_{j=1}^{N^-} V_{T-1}(\xi^j) c^j(x_{T-2},u_{T-2}).
\end{align*}
Applying the same convergence result due to importance sampling, as in Lemma~\ref{lemma:convergence T-1}, 
\begin{align*}
     &\lim_{N^-\to\infty} \Omega_{T-2}^l=\\
     &\lim_{N^-\to\infty}\left\{\ell_{T-2}(\xi^l,u_{T-2})+\sum_{j=1}^{N^-} V_{T-1}(\xi^j) c^j(\xi^l,u_{T-2})\right\}
     \\
     &=V_{T-2}(\xi^l), \text{ for all $l$}.
\end{align*}
The order of the limit above can be reversed, and one can then use the dominated convergence theorem and reach the same result.
\end{proof}

Next, by induction, we extend the convergence result and show that it holds for all time-steps, from $k=T-1$ and down to $k=0$.
\vskip 3mm
\begin{corollary}\label{corollary1}
The Monte Carlo approximation $\tilde V_{k}^{N}$, in (\ref{FDPE:approx}), converges to $V_{k}$ almost surely on $\mathbb X$, as $N \to \infty$.
\end{corollary}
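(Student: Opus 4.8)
The plan is to prove the statement by backward induction on the time index $k$, running from $k=T-1$ down to $k=0$. Lemma~\ref{lemma:convergence T-1} already supplies the base case $k=T-1$, and Proposition~\ref{proposition1} establishes the first inductive step, from $k+1=T-1$ to $k=T-2$; the corollary then amounts to observing that this same inductive step can be iterated all the way down to $k=0$ without modification.

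For the inductive hypothesis I would assume that at stage $k+1$ the Monte Carlo approximation converges almost surely, i.e. $\lim_{N\to\infty}\Omega^j_{k+1}=V_{k+1}(\xi^j)$ for every particle index $j$ on an event of probability one. To carry out the inductive step at stage $k$, I would reproduce the two-stage limiting argument of Proposition~\ref{proposition1} verbatim, replacing $(T-2,T-1)$ by $(k,k+1)$: first let the number of particles $N^+$ used to form the weights $\Omega^j_{k+1}$ tend to infinity, so that by the inductive hypothesis each $\Omega^j_{k+1}$ may be replaced by $V_{k+1}(\xi^j)$ inside the finite sum; then let the number of particles $N^-$ appearing in the self-normalizing importance sampling estimate tend to infinity, invoking the almost sure convergence of that estimate \cite{liu2001monte} together with the finiteness of $\mathbb{U}$ (Assumption~\ref{Assumption1}) to interchange the minimization with the limit. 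Applied at the particle locations this gives $\lim_{N\to\infty}\Omega^l_k=V_k(\xi^l)$ for all $l$, which propagates the hypothesis to stage $k-1$; applied at an arbitrary $x_k\in\mathbb{X}$ the identical argument gives $\tilde V_k^N(x_k)\to V_k(x_k)$, which is the convergence asserted in \eqref{FDPE:approx}.

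The main obstacle, and the only point requiring genuine care at each level of the recursion, is the interchange of the two limits $N^+\to\infty$ and $N^-\to\infty$ — equivalently, the passage from the iterated limit to the single limit $N\to\infty$ in which the same particle set $\{\xi^j\}$ simultaneously defines the value weights and the importance weights $c^j$. As in Proposition~\ref{proposition1}, I would justify this by the dominated convergence theorem: Assumption~\ref{Assumption1} (compactness of $\mathbb{X}$, nonnegativity and almost-everywhere Lipschitz continuity of the stage costs, and the forward-invariance constraint \eqref{forward_invariance_constraint} guaranteeing $\mathrm{supp}\,\mathcal{W}(\cdot-f(x_k,u_k))\subset\mathbb{X}$) supplies the integrable domination needed to reverse the order of the limits and collapse them into $N\to\infty$. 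Finally, since the induction involves only finitely many stages and each step excludes at most a probability-zero set, the intersection of these events still has probability one, so the convergence holds almost surely on $\mathbb{X}$ for every $k\in\{0,1,\dots,T-1\}$ simultaneously.
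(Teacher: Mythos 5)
Your proposal is correct and follows essentially the same route as the paper: the paper's own proof of Corollary~\ref{corollary1} is a one-line appeal to backward induction using Lemma~\ref{lemma:convergence T-1} as the base case and Proposition~\ref{proposition1} as the template for the inductive step, which is exactly the argument you spell out (including the iterated-limit/dominated-convergence justification that the paper mentions only in passing at the end of Proposition~\ref{proposition1}). No substantive difference to report.
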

\begin{proof}
By induction, using Lemma~\ref{lemma:convergence T-1} and Proposition~\ref{proposition1}.
\end{proof}

The results in the paper so far are exclusively valid for the case of $U(x_k)$ being finite. We seek to extend that for the case when $U(x_k)$ is compact. For this case, we need the Lipschitzness of $\ell(x_k,u_k)$ and $f(x_k,u_k)$ in $u_k$ as well. We prove the convergence over a randomly generated finite grid of $U(\xi^l)$ as this number of samples in the grid approach $\infty$. The Lipschitzness over $u_k$ implies the boundedness and convergence to $0$ of the error introduced the piecewise linear interpolation over this grid. We start with the following result.
\vskip 3mm
\begin{lemma}
(Proposition~1 in \cite{bertsekas1975convergence}) $V_k$ is Lipschitz on $\mathbb{X}$, for all $k\in \mathbb{I}_T$.
\end{lemma}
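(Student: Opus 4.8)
The plan is to prove the statement by backward induction on $k$, descending from $k=T$ to $k=0$ in lockstep with the DP recursion \eqref{FDPE}. The base case $k=T$ is immediate: $V_T=\ell_T$ is Lipschitz on $\mathbb{X}$ directly by Assumption~\ref{Assumption1}(2). For the inductive step I would assume $V_{k+1}$ is Lipschitz on $\mathbb{X}$ with constant $L_{k+1}$ and produce an explicit Lipschitz constant for $V_k$ in terms of $L_{k+1}$ and the Lipschitz constants of $f(\cdot,u_k)$ and $\ell_k(\cdot,u_k)$.

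The key structural observation is that the additive-noise form of the dynamics \eqref{FDPE:Dynamical System} lets me pull the entire $x_k$-dependence of the prediction expectation inside the argument of $V_{k+1}$. Substituting $w=x_{k+1}-f(x_k,u_k)$ gives
\begin{align*}
\textbf{E}_{x_{k+1}}[V_{k+1}(x_{k+1})\mid x_k,u_k]=\int V_{k+1}(f(x_k,u_k)+w)\,\mathcal{W}(w)\,dw,
\end{align*}
and the forward-invariance constraint \eqref{forward_invariance_constraint} ensures $f(x_k,u_k)+w\in\mathbb{X}$ for almost every $w$, so $V_{k+1}$ is evaluated only where the inductive hypothesis applies. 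For two states $x_k,x_k'$ and a fixed feasible $u_k$ I would bound the difference of the two integrals by
\begin{align*}
\int \bigl|V_{k+1}(f(x_k,u_k)+w)-V_{k+1}(f(x_k',u_k)+w)\bigr|\,\mathcal{W}(w)\,dw\le L_{k+1}L_f\,\|x_k-x_k'\|,
\end{align*}
applying Lipschitzness of $V_{k+1}$, then of $f(\cdot,u_k)$ (constant $L_f$), and using that $\mathcal{W}$ integrates to one. Adding the Lipschitz stage cost $\ell_k(\cdot,u_k)$ shows that for each fixed $u_k$ the bracketed objective in \eqref{FDPE} is Lipschitz in $x_k$ with constant $L_{\ell_k}+L_{k+1}L_f$. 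Since $\mathbb{U}$ is finite by Assumption~\ref{Assumption1}(1), $V_k$ is a pointwise minimum of finitely many such functions, and the elementary fact that the minimum of finitely many $L$-Lipschitz functions is $L$-Lipschitz closes the induction.

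The main obstacle I anticipate lies not in the averaging estimate but in the two qualifiers attached to the assumptions. First, the feasible set $U(x_k)$ is state-dependent, so the ``minimum of finitely many Lipschitz functions'' step is not automatic: a control optimal at $x_k'$ need not lie in $U(x_k)$, which breaks the usual one-line comparison. I would need to argue that for $x_k,x_k'$ sufficiently close their feasible sets overlap enough to carry the comparison on a common control, leveraging the finiteness of $\mathbb{U}$ together with the structure of \eqref{forward_invariance_constraint}. Second, the Lipschitz hypotheses in Assumption~\ref{Assumption1}(2) hold only almost everywhere, so $V_{k+1}$ satisfies its bound off a Lebesgue-null set; here I would use that translating a null set by $f(x_k,u_k)$ stays null and that $\mathcal{W}\,dw$ is absolutely continuous, so the integrand inequality holds for almost every $w$ and the integral bound is unaffected. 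In fact the convolution against $\mathcal{W}$ works in our favor, regularizing the exceptional set, and it is the bookkeeping around these two technical points---rather than the core estimate---where the real work resides.
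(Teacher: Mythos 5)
Your proof is essentially the argument the paper relies on: the paper's own "proof" is a one-line deferral to Proposition~1 of \cite{bertsekas1975convergence}, whose content is exactly your backward induction — compose the Lipschitz bound for $V_{k+1}$ with that of $f(\cdot,u_k)$ inside the noise integral, use that $\mathcal{W}$ integrates to one, add the Lipschitz stage cost, and take a minimum over finitely many controls. Your core estimate $L_{\ell_k}+L_{k+1}L_f$ is correct, and your treatment of the almost-everywhere qualifier via absolute continuity of $\mathcal{W}\,dw$ is sound (indeed the convolution makes the exceptional set harmless). The one point you flag but do not close — that $U(x_k)$ is state-dependent, so "min of finitely many Lipschitz functions" does not apply verbatim — is a genuine issue, but note that the paper itself sidesteps it: in its proof sketch the infimum is written over the state-independent set $\mathbb{U}_{T-1}$ rather than $U(x_{T-1})$, which makes the comparison-on-a-common-control step immediate. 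If one insists on minimizing over $U(x_k)$ as defined in \eqref{forward_invariance_constraint}, some additional regularity of the set-valued map $x\mapsto U(x)$ (e.g., local constancy or lower semicontinuity, plausible here since $\mathbb{U}$ is finite) would be needed to finish; your proposal correctly identifies this as the only real obstruction but leaves it as a plan rather than a proof.
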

\begin{proof}
Provided in \cite{bertsekas1975convergence}, with the sum there replaced by the expectation in
\begin{align*}
    V_{T-1}(x_{T-1})&= \inf_{u_{T-1} \in \mathbb{U}_{T-1}} \textbf{E}_{x_{T}} \{\ell_{T-1}(x_{T-1},u_{T-1}) \\
    &\hskip 15mm+ V_{T}(x_{T})\mid x_{T-1},u_{T-1}\},\\
    &=\inf_{u_{T-1} \in \mathbb{U}_{T-1}} \{\ell_{T-1}(x_{T-1},u_{T-1}) +\\
    &\hskip 15mm\int_{\mathbb{X}_T}V_{T}(x_T)p(x_T\mid x_{T-1},u_{T-1})\,dx_T\},
\end{align*}
and then all the following steps therein hold true here as well.
\end{proof}

For a fixed $x_k \in \mathbb X$, let $\{\nu^q\}_{q=1}^{N_q}$ be i.i.d. samples from a density $\mathcal{U}$ with a support $U(x_k)$. We introduce the following partitioning of the input space $U(x_k)$ that will help us later in the proof of convergence. Let the sequence of sets $\{A^i\}_{i=1}^{N_q}$ be such that
\begin{align*}
    A^i=\{u \in U(x_k): \text{argmin}_{q\in\mathbb{I}_{N_q}} \lVert u-\nu^q\rVert  = i\},
\end{align*}
where $\lVert \cdot\rVert $ denotes the Euclidean norm. Then $\{A^i,\,i\in \mathbb{I}_{N_q}\}$ partitions $\mathbb{X}$. Next,
define the maximum diameter of this partition as:
\begin{align*}
    d_s=\max_{i\in \mathbb{I}_{N_q}}\, \sup_{u\in A^i} \mid \mid u-\nu^i\mid \mid 
\end{align*}

\begin{lemma}
\label{lemmaBC}
As $N_q \to \infty$, the maximum diameter $d_s \to 0$ with probability $1$.
\end{lemma}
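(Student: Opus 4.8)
The plan is to recognize $d_s$ as the covering radius (fill distance) of the sample $\{\nu^q\}$ in $U(x_k)$ and to control it through a finite cover together with the first Borel--Cantelli lemma. Observe first that, since $\nu^i$ is by construction the nearest sample to every $u\in A^i$, we have the identity
\begin{align*}
 d_s=\max_{i\in\mathbb{I}_{N_q}}\sup_{u\in A^i}\lVert u-\nu^i\rVert=\sup_{u\in U(x_k)}\min_{q\in\mathbb{I}_{N_q}}\lVert u-\nu^q\rVert,
\end{align*}
so $d_s\le\epsilon$ is equivalent to the statement that every point of $U(x_k)$ lies within $\epsilon$ of some sample.

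Fix $\epsilon>0$. Since $U(x_k)$ is compact, the open cover $\{B(u,\epsilon/2):u\in U(x_k)\}$ admits a finite subcover with centers $c_1,\dots,c_{M}\in U(x_k)$, where $M=M_\epsilon$ depends only on $\epsilon$. The key deterministic step is the inclusion
\begin{align*}
 \{d_s>\epsilon\}\subseteq\bigcup_{m=1}^{M}\{\text{no }\nu^q\text{ lies in }B(c_m,\epsilon/2)\}:
\end{align*}
indeed, if every ball $B(c_m,\epsilon/2)$ contains at least one sample, then for any $u\in U(x_k)$ we pick $m$ with $\lVert u-c_m\rVert<\epsilon/2$ and a sample $\nu^q\in B(c_m,\epsilon/2)$, and the triangle inequality gives $\lVert u-\nu^q\rVert<\epsilon$, forcing $d_s\le\epsilon$.

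Next I would quantify the probabilities, viewing $\{\nu^q\}$ as a prefix of a single infinite i.i.d. sequence, as in Proposition~\ref{proposition1}. Because $c_m\in U(x_k)=\operatorname{supp}(\mathcal U)$, every ball $B(c_m,\epsilon/2)$ has positive $\mathcal U$-measure; setting $p:=\min_m \Pr(\nu\in B(c_m,\epsilon/2))>0$ (a finite minimum of positive numbers), the probability that none of the $N_q$ samples lands in a given ball is at most $(1-p)^{N_q}$. A union bound then yields $\Pr(d_s>\epsilon)\le M(1-p)^{N_q}$, which is summable in $N_q$ since $0<1-p<1$. The first Borel--Cantelli lemma gives $\Pr(d_s>\epsilon\text{ infinitely often})=0$, that is, $d_s\le\epsilon$ for all sufficiently large $N_q$ almost surely. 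Finally, intersecting the corresponding probability-one events over the countable family $\epsilon=1/n$, $n\in\N$, yields $d_s\to0$ with probability $1$.

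I expect the main obstacle to be the positivity of $p$: it hinges on choosing the cover centers inside $U(x_k)$ and on invoking the definition of support (every neighborhood of a support point carries positive mass), so that each $\Pr(\nu\in B(c_m,\epsilon/2))>0$; compactness is precisely what makes the cover finite and hence the minimum $p$ strictly positive. The remaining estimates are routine.
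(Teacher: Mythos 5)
Your proof is correct, and it takes a genuinely different route from the paper's. The paper fixes a single point $u \in U(x_k)$ and an $\epsilon>0$, notes that the events $C_j(u)=[\nu^j\in B_\epsilon(u)]$ are independent with a common positive probability, and invokes the \emph{second} Borel--Cantelli lemma (the Zero--One law for independent events) to conclude that some sample eventually lands in $B_\epsilon(u)$; it then asserts that $d_s$ must vanish. That argument, as written, only controls $\min_q\lVert u-\nu^q\rVert$ for each \emph{fixed} $u$, and the passage from this pointwise statement to the supremum over $u$ that defines $d_s$ is left implicit. You instead rewrite $d_s$ as the covering radius $\sup_{u}\min_q\lVert u-\nu^q\rVert$, extract a finite $\epsilon/2$-subcover of the compact set $U(x_k)$, and combine a union bound with the \emph{first} Borel--Cantelli lemma via the summable estimate $\Pr(d_s>\epsilon)\le M(1-p)^{N_q}$. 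This buys exactly the uniformity over $u$ that the statement requires, and as a bonus gives an explicit geometric decay rate for $\Pr(d_s>\epsilon)$; the cost is the extra covering step, whereas the paper's pointwise argument is shorter but leaves the uniformization gap. Your closing remarks about why $p>0$ (support points carry positive mass on every neighborhood, and a finite minimum of positive numbers is positive) are exactly the right things to check.
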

\begin{proof}
For any $u \in U(x_k)$, fix $\epsilon>0$. Define the open ball of radius $\epsilon$ and center $u$ as $B_\epsilon(u)=\{\bar u\in U(x_k): \lVert u-\bar u\rVert <\epsilon\}$. Define the event $C_i(u)=[\nu_i \in B_\epsilon(u)]$. Since the particles $\{\nu^q\}_q$ are generated according to a density supported by $U(x_k)$, we have $\text{Pr }(C_j(u))=\text{constant}>0$. Hence, $\sum_{j=1}^{N_q}\text{Pr }(C_j(u))\to \infty$ as $N_q\to \infty$. By the Borel-Cantelli Zero-One Law \cite{resnick2019probability}, $\text{Pr }[C_j(u) \text{, infinitely often}]=1$, in particular, there exists some finite positive integer $\bar N$, such that we get $\nu^{\bar N}\in B_\epsilon(u)$. This holds for all $\epsilon>0$, therefore $d_s$ has to vanish.
\end{proof}

Now we are ready to extend Corollary~\ref{corollary1} to the case when $U(x_k)$ is compact.
\vskip 3mm
\begin{proposition}
If restricting the calculation of the DP equation to $U(x_k)=\{\nu^q : q=1,\hdots,N_q\}$ yields $\hat V^{N_q}_k(x_k)$, then it converges almost surely to $V_k(x_k)$ ($U(x_k)$ is compact), for all $x_k \in \mathbb{X}$, as $N_q \to \infty$.
\end{proposition}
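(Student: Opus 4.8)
The plan is to exploit the covering property furnished by Lemma~\ref{lemmaBC} together with regularity of the per-stage objective in the control variable, and thereby sandwich $\hat V^{N_q}_k(x_k)$ between $V_k(x_k)$ and $V_k(x_k)$ plus a term that vanishes almost surely. Write the quantity minimized in the DP equation \eqref{FDPE} as
\[
Q_k(x_k,u_k) = \ell_k(x_k,u_k) + \mathbf{E}_{x_{k+1}}\left[V_{k+1}(x_{k+1}) \mid x_k,u_k\right],
\]
so that $V_k(x_k) = \inf_{u_k \in U(x_k)} Q_k(x_k,u_k)$ while $\hat V^{N_q}_k(x_k) = \min_{q \in \mathbb{I}_{N_q}} Q_k(x_k,\nu^q)$. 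Since the finite grid $\{\nu^q\}$ is a subset of $U(x_k)$, the lower bound $V_k(x_k) \le \hat V^{N_q}_k(x_k)$ is immediate; the work lies in the matching upper bound.

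First I would establish that, for fixed $x_k$, the map $u_k \mapsto Q_k(x_k,u_k)$ is Lipschitz on $U(x_k)$. After the change of variables $w = x_{k+1} - f(x_k,u_k)$, the expectation term becomes $\int V_{k+1}(f(x_k,u_k)+w)\,\mathcal{W}(w)\,dw$, which confines the entire $u_k$-dependence to the argument of $V_{k+1}\circ f$. Using the Lipschitzness of $V_{k+1}$ on $\mathbb{X}$ (the preceding lemma, Proposition~1 of \cite{bertsekas1975convergence}) and the assumed Lipschitzness of $f(x_k,\cdot)$, the integrand differs by at most $L_V L_f \lVert u_k - u_k' \rVert$ pointwise in $w$; integrating against the probability density $\mathcal{W}$, which has unit mass, preserves this bound, and the Lipschitzness of $\ell_k(x_k,\cdot)$ supplies the remaining term. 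The forward-invariance constraint \eqref{forward_invariance_constraint} guarantees $f(x_k,u_k)+w \in \mathbb{X}$ for $w$ in the support of $\mathcal{W}$, so $V_{k+1}$ is evaluated only where its Lipschitz estimate is valid. This yields $\lvert Q_k(x_k,u_k) - Q_k(x_k,u_k') \rvert \le L_Q \lVert u_k - u_k' \rVert$ for a finite constant $L_Q$.

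With Lipschitz continuity in hand, I would close the sandwich. Because $U(x_k)$ is compact and $Q_k(x_k,\cdot)$ is continuous, the infimum is attained at some $u^*$; this $u^*$ lies in some cell $A^i$ of the nearest-neighbor partition of $U(x_k)$, whence $\lVert u^* - \nu^i \rVert \le d_s$. Then
\[
\hat V^{N_q}_k(x_k) \le Q_k(x_k,\nu^i) \le Q_k(x_k,u^*) + L_Q d_s = V_k(x_k) + L_Q d_s.
\]
Chaining this with the lower bound gives $0 \le \hat V^{N_q}_k(x_k) - V_k(x_k) \le L_Q d_s$, and Lemma~\ref{lemmaBC} sends $d_s \to 0$ with probability one as $N_q \to \infty$, establishing almost sure convergence for each fixed $x_k$.

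The main obstacle I anticipate is the Lipschitz-in-$u_k$ estimate for the expectation term, rather than the sandwiching itself. The delicate points are keeping the change-of-variables bound uniform in $w$ so that integrating against $\mathcal{W}$ does not inflate the constant, and reconciling the ``almost everywhere'' qualifier carried by the Lipschitz hypotheses with the pointwise bound needed at $u^*$ and $\nu^i$; the latter can be handled by noting the exceptional set has measure zero and, together with continuity, does not disturb the attained minimizer argument. I would also keep $Q_k$ as the \emph{exact} DP objective throughout, so that this proposition isolates the control-discretization error from the state-sampling error already controlled by Corollary~\ref{corollary1}; merging the two limits would then reduce to a routine diagonal argument.
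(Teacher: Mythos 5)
Your proposal is correct and follows essentially the same route as the paper: both arguments hinge on the Lipschitz continuity of the exact DP objective in $u_k$ and on Lemma~\ref{lemmaBC} sending the nearest-neighbor partition diameter $d_s$ to zero almost surely, so that the grid point in the cell containing the true minimizer yields a value within $L_Q d_s$ of the optimum. Your explicit two-sided sandwich (using that the grid is a subset of $U(x_k)$ for the lower bound) and the detailed derivation of the Lipschitz constant for the expectation term are slightly more complete than the paper's version, which instead adds a separate Borel--Cantelli argument for convergence of the minimizer itself, but the underlying idea is the same.
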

\begin{proof}
So far we have shown that
\begin{align*}
    \lim_{N \to \infty} \left \{ \ell_k(x_k,u_k) + \sum_{j=1}^N V_{k+1}(\xi^j) c^j(x_k,u_k) \right \} =\\
    \ell_k(x_k,u_k) + \int_{\mathbb X} V_{k+1}(x_{k+1}) \mathcal{W}(x_{k+1} - f(x_k,u_k)) dx_{k+1}=\\
    \ell_k(x_k,u_k) + \int_{\mathbb X} V_{k+1}(f(x_k,u_k)+w_k) p(w_k) dw_k=: \mathcal{B}(u_k),
\end{align*}
the first equality holds pointwise on $u_k$. Therefore we assume $N$ large enough and consider the integral expression in the following proof.

Notice that $\mathcal{B}(u_k)$ is Lipschitz in $u_k$, since $\ell,\,f,\,V_{k+1}(f(x_k,u_k)+w_k),\,c$ are all Lipschitz in $u_k$, for any $w_k$, hence, it is Lipschitz on $U(x_k)$, since it is compact. Therefore, suppose $u_k ^\star$ is the minimizer, that is, $u_k^\star = \text{argmin}\mathcal{B}(u_k)$. Then, since $\{A^i\}_i$ partition $U(x_k)$, there is $m(N_q) \in \mathbb{I}_{N_q}$ such that $u_k^\star \in A^{m(N_q)}$ and
\begin{align*}
    \mid \mathcal{B}(u_k^\star) - \mathcal{B}(\nu^{m(N_q)})\mid &\leq \mathcal{A} \lVert u_k^\star - \nu^{m(N_q)} \rVert\\
    &\leq \mathcal{A} d_s \to 0, \quad \text{as} \quad N_q \to \infty,
\end{align*}
where $\mathcal{A}$ is the Lipschitz constant of $\mathcal{B}$ over $U(x_k)$. Hence, 
\begin{align*}
    \lim_{N_q \to \infty} \mathcal{B}(\nu^{m(N_q)}) = \mathcal{B}(u_k^\star).
\end{align*}
For the convergence of the argmax, for any $\epsilon>0$, the set $U_0=\{u \in \R^{r_u} \mid \mathcal{B}(u_k^\star)>\mathcal{B}(u_k^\star)-\epsilon\}$ is non-empty and open, due to the Lipschitzness of $\mathcal{B}$, and has a probability $>0$ under $\mathcal{U}$, since $\text{supp}(\mathcal{U})=U(x_k)$. Therefore, similar to the reasoning above, the Borel-Cantelli Zero-One law tells us that there exists, with probability $1$, a finite $N_q^'$, such that $\mathcal{B}(\nu^{N_q'}) \in U_0$. This holds for all $\epsilon>0$, and therefore, $\nu^{m(N_q)} \to \{u \mid \mathcal{B}(u)=\mathcal{B}(u_k^\star)$ as $N_q \to \infty$, with probability $1$.

\end{proof}
 
 \section{Infinite-Horizon Discounted Cost}\label{section5}
The Dynamic Programming Equation (DPE) for the infinite-horizon discounted cost \cite{bertsekas2012dynamic} of the system (\ref{FDPE:Dynamical System}) is defined by
\begin{align}
    V(x_k) &= \min_{u_k \in U(x_k)}\{\ell(x_k,u_k)+ \alpha\textbf{E}_{x_{k+1}} \{V(x_{k+1})\mid \, x_k,u_k\}\},\nonumber\\
    &=\min_{u_k \in U(x_k)}\{\ell(x_k,u_k)+ \nonumber\\
    &\hskip10mm\alpha \int_{\mathbb{X}} V(x_{k+1})p(x_{k+1}\mid x_k,u_k)\,dx_{k+1}\}, \label{IH_DPE}
    \end{align}
    with a compact set $\mathbb{X} \in \mathbb{R}^{r_x}$, $U(x_k)$, as defined in the finite-horizon case, is finite, and the discount factor $\alpha \in (0,1)$. The prediction density $p(x_{k+1}\mid x_k,u_k)$ 
\begin{align}
    p(x_{k+1}\mid x_k,u_k)=\mathcal{W}(x_{k+1}-f(x_k,u_k)), \label{processNoise}
\end{align}
where the evolution of the stationary process $\{x_k\}_{k \geq 0}$ is defined as in \eqref{FDPE:Dynamical System}.

Before investigating the convergence of the DP algorithm as $T\to \infty$, we need to establish the existence of the corresponding cost over the infinite-horizon. The condition
\begin{align}
    \lim_{T \to \infty} \textbf{E}_{w_0,\hdots,w_T} \left \{ \sum_{k=0}^T \alpha^k\ell(x_k,u_k) \right \} < \infty,
\end{align}
can be achieved, for instance, by assuming the stage-cost $\ell$ to be bounded over $\mathbb X$, for all $u_k$, or equivalently, to be Lipschitz over $\mathbb X$ (hence bounded, since $\mathbb X$ is compact). Therefore, the optimal value (the minimum of the above cost) $V$ exists and is finite. Let the stage cost $ell_k$ in the previous section be such that $\ell_k=\alpha^k \ell$, and follow the corresponding notation for $V^N_T$. We now need to show that $\tilde V^N_T \to V$ as $T,N \to \infty$. 

Notice that, if $\ell < \beta$, 
\begin{align*}
    | V(x) - V_T(x) | \leq \frac{\alpha^T \beta}{1-\alpha},
\end{align*}
for all $x \in \mathbb X$, as $V(x) - V_T(x)$ is upper bounded by a geometric series. Since $\alpha^T \to 0$ as $T \to \infty$, $V_T \to V$ uniformly on $\mathbb X$ as $T \to \infty$. For the almost sure convergence of $\tilde V^N_T $ to $V$,
\begin{align*}
    | V(x) - \tilde V^N_T(x) | &=   | V(x) -V_T(x) + V_T(x) - \tilde V^N_T(x) |, \\ 
    & \leq | V(x) -V_T(x)| + |V_T(x) - \tilde V^N_T(x) | \to 0
\end{align*}
as $T \to \infty$, as shown above, and as $N \to \infty$, per Corollary~\ref{corollary1}\footnote{The adaptation of this corollary to include the discount factor $\alpha$ is straightforward.}.
 
The following algorithm is the analogous to Algorithm~\ref{algorithmDPE}, but for the infinite-horizon discounted cost case.

\begin{algorithm}[ht]
\caption{\label{algorithmDPE_infty}Particle DP (Infinite-Horizon Discounted)}
\begin{algorithmic}[0]
\Initialization

Let $\mathcal{X}$ be a density function on $\mathbb{X}$ such that $\textit{supp}(\mathcal{X}) = \mathbb{X}$, and let the particles $\{\xi^j\}_{j=1}^{N}$ be i.i.d. sample according to it. Initialize the weights $\{\Omega^l\}_{l=1}^{N}$, as zeros, randomly, or maybe,
\begin{align} \label{terminal_weights}
\Omega^l=\ell(\xi^l),\,l=1,\hdots,N.
\end{align}
\EndInitialization

\While{Convergence criterion not met,}
\begin{itemize}
    \item Evaluate, for all $l=1,2,\hdots,N$,
\begin{align*}
        \Omega^l_{+}&=\min_{u_k \in U(\xi^l)}\{\ell(\xi^l,u_k)+\sum_{j=1}^{N} \Omega^j c^j(\xi^l,u_k)\},
\end{align*}
        where $U(\xi^l)$ and $c^j(x_k,u_k)$ are as defined in Algorithm~\ref{algorithmDPE}.

\item Test convergence criterion between $\{\Omega^l_{+}\}_l$ and $\{\Omega^l\}$, if not met, $\Omega^l \gets \Omega^l_{+}$, for all $l$.
\end{itemize}
\EndWhile
\end{algorithmic}
\end{algorithm}

Algorithm~\ref{algorithmDPE_infty} approximates the value function as 
\begin{align*}
    \tilde V^N(x)= \min_{u \in U(x)}\{\ell(x,u)+ \alpha.\sum_{j=1}^N \Omega^j c^j(x,u)\},
\end{align*}
while the resulting control law, analogous to \eqref{controller},
\begin{align*}
    u^*(x)=\text{argmin}_{u \in U(x)}\left \{\ell(x,u)+\sum_{j=1}^{N} \Omega^j c^j(x,u)\right\} \label{controller_infty}
\end{align*}

\section{Probabilistic Constraints}
An important aspect of applying DP to real-world problems involves ensuring system safety and satisfying some operation conditions. In stochastic settings, the probabilistic constraint formulation becomes a natural option to adopt. In this section, we show how the incorporation of probabilistic constraints in our DP approaches is straightforward, due to the importance sampling results we rely on.

We consider the optimal control problem introduced in Section~\ref{Section:finitehorizon}, but, in addition, with probabilistic constraints: the state is to remain within a predefined safe set $\mathbb{S} \subset \mathbb{X}$ with a high probability
\begin{equation*}
\text{Pr}(x_k \in \mathbb{S}) \geq 1 - \epsilon, \quad \text{for all } k,
\end{equation*}
where $\epsilon \in [0,1)$ is the violation probability, typically very small $\epsilon \ll 1$. To effectively integrate these probabilistic constraints into the DP framework, we adjust the control space to include only those actions that satisfy the probabilistic safety requirement from any given state $x_k$, that is, the permissible control set at state $x_k$
\begin{align}
    \bar U(x_k)&=\{u_k \in U(x_k)\mid \,\text{Pr}(x_{k+1} \in \mathbb S\mid x_k,u_k)\geq 1- \epsilon\},\label{inputProbConst}
\end{align}
where $U(x_k)$ is defined as in \eqref{forward_invariance_constraint}.

Our Monte Carlo approach can naturally be leveraged to calculate the safety violation probability. Given a control action $u_k$ and the current state $x_k$, the probability of transitioning into the safe set in the next step is given by
\begin{align*}
\text{Pr}(x_{k+1} \in \mathbb{S} \mid x_k, u_k) &= \int_{\mathbb{S}} p(x_{k+1} \mid x_k, u_k),dx_{k+1},\\
&= \int 1_{\mathbb S}\, p(x_{k+1} \mid x_k, u_k),dx_{k+1},\\
&= \textbf{E} \left \{ 1_{\mathbb S}\mid x_k,u_k\right \},
\end{align*}
where $1_{\mathbb S}$ is the indicator function over the safe set $\mathbb S$. Towards using importance sampling, let $\{\xi^j\}_{j=1}^{N}$ be a set of particles sampled from a density $\mathcal{X}$ supported over $\mathbb X$. Particles that fall outside $\mathbb{S}$ help identify regions of the state space that violate the safety constraint. The set of indices for these particles is denoted by $\mathbb{I}$, that is,
\begin{equation}
\mathbb{I} = \{ j \mid \xi^j \in \mathbb{S}^c \},
\end{equation}
and $\mathbb{S}^c$ (the unsafe set) is the complement of $\mathbb{S} \in \mathbb{X}$. The safety probability is then approximated by: 
\begin{align*}
\text{Pr}(x_{k+1} \in \mathbb{S} \mid x_k, u_k) &= 1 - \text{Pr}(x_{k+1} \in \mathbb{S}^c \mid x_k, u_k),\\
&=1-\textbf{E} \left \{ 1_{\mathbb S^c}\mid x_k,u_k\right \},\\
&=1-\sum_{j \in \mathbb{I}} c^j(x_k, u_k),
\end{align*}
where $c^j(x_k, u_k)$ is defined as in Algorithm~\ref{algorithmDPE}. This approximate is, again, the self-normalizing importance sampling estimate \cite{liu2001monte}. It is unbiased, and converges almost surely as $N \to \infty$.

Using the importance sampling approximation above, the input constraint set $\bar U(x_k)$ in \eqref{inputProbConst} can be approximated by
\begin{align*}
    U^N(x_k)&=\{u_k \in U(x_k)\mid \,1-\sum_{j\in \mathbb I} c^j(x_k,u_k)\geq 1-\epsilon\}.
\end{align*}

The adaption, of the two DP algorithms presented in the previous sections, such that they incorporate probabilistic constraints, is done through the following steps
\begin{enumerate}
    \item The input space $U(\xi^l)$ presented in either algorithm, is to be replaced by $U^N(\xi^l)$.
    \item The algorithm is applied to the particles with indices outside of $\mathbb I$.
    \item If a particle $\xi^l$ results in an empty $U^N(\xi^l)$, then $l$ is added to $\mathbb I$.
\end{enumerate}

\section{Numerical Examples}
Two examples are shown here. The first is a simple linear Gaussian model where the infinite-horizon discounted cost value function is known. The second is a two-state nonlinear system which must satisfy a probability constraint.
\begin{example}
We consider a linear Gaussian state-space model, pivotal for deriving the infinite-horizon discounted-cost Linear Quadratic Regulator (LQR) solution. The model is expressed as $x_{k+1} = 0.95x_k + u_k + w_k, w_k \sim \mathcal{N}(0,0.5)$, facilitating a direct comparison between theoretical value functions and empirical results obtained via the Particle Value Iteration algorithm. The DPE for the value function, $V(x_k)$, is as follows:
\begin{equation*}
    V(x_k) = \min_{u} \left\{ x_k^TQx_k + u_k^TRu_k + \alpha \mathbf{E}_{w_k} \{V(Fx_k + Bu_k + w_k)\} \right\},
\end{equation*}
which simplifies to $V(x_k)=x_k^T X x_k+q$, where $X$ is derived from the Algebraic Riccati Equation (ARE). The theoretical value function is found to be $V(x_k) = 1.460x_k^2 + 6.570$.

Using the Particle Value Iteration Algorithm \ref{algorithmDPE_infty}, with initial particles $\{\xi^j\}_{j=1}^{2000}$ distributed according to $\mathcal{N}(0,4)$, and 50 control actions sampled from $\mathcal{N}(0,1)$, we approximate the value function as:
\begin{equation*}
    V(x_k) = 1.410x_k^2 - 0.0023x_k + 5.52,
\end{equation*}
illustrating the algorithm's capacity for closely approximating optimal control strategies within stochastic environments. The particle-based approach enhances computational efficiency by avoiding grid-based ordering and interpolation, typical in conventional dynamic programming. This facilitates parallelized approximation across particles, yielding speedups and scaling efficiently for complex scenarios with probabilistic constraints.
\end{example}


\begin{example}
\label{example2}
 Consider a nonlinear system characterized by a two-dimensional state vector $x_k=\left(x_{1,k}, x_{2,k}\right)^T$ evolving under specific dynamics influenced by stochastic disturbances $w_k=$ $\left(w_{1, k}, w_{2, k}\right)^T$, following a Gaussian distribution with zero mean and covariance $0.3 \,\mathbb{I}_{2 \times 2}$. The system's dynamics are represented by:
\begin{align*}
x_{1, k+1} &= 0.9\, x_{1, k} + 0.2\,x_{2, k} + w_{1, k},\\
x_{2, k+1} &= -0.15\,x_{1, k} + 0.9\,x_{2, k} + 0.05\,x_{1, k}\, x_{2, k} + u_k + w_{2, k}.
\end{align*}
Suppose the L-shaped state set, $\mathbb{S}^c=[3,5]\times[-4,2] \cup [-2,5]\times [-7,-4]$, is the unsafe set within $\mathbb{X}$ that the system aims to avoid. To establish an optimal control law, denoted as $\kappa$, we consider the value function $V$. This function represents the optimal cost over an infinite horizon, with stage costs defined by $\ell_k\left(x_k, u_k\right)=x_k^T x_k+u_k^2$ and a discount factor $\gamma=0.9$. 

The Value Iteration process is implemented using the proposed approach in Algorithm \ref{algorithmDPE_infty}. This involves generating a set of 2000 uniformly distributed points across the state space $\mathbb{X}=[-10,10] \times[-5,15]$ and 50 control action points within $\mathbb{U}=[-3,3]$. The algorithm iterates towards optimizing $\kappa$, with convergence criteria set to a maximum absolute relative error below $5 \%$. The outcomes of this iterative process are visually represented in
Figure \ref{fig:controlLawColorMap}, where the colormap illustrates the distribution of $\kappa$ across the state space. \footnote{The results of this example can be reproduced using our open-source \textsc{Python} code: {\tt https://github.com/msramada/MC_DynamicProgramming}}

\begin{figure}
\centering 
\includegraphics[scale=0.55]{./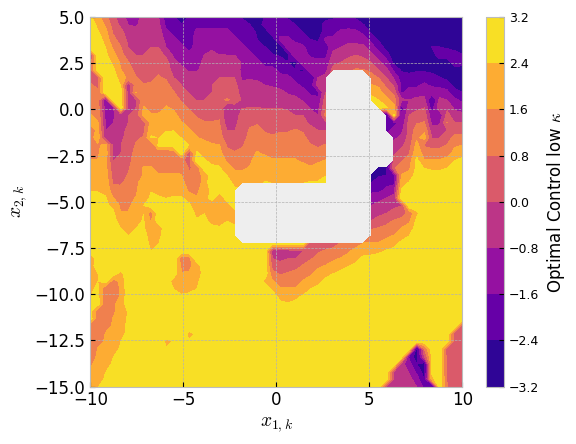}
\caption{Example \ref{example2}: Colormap of $\kappa$ acquired by Algorithm~\ref{algorithmDPE_infty} and fitted by a piece-wise surface. White regions correspond to infeasible states.}
\label{fig:controlLawColorMap}
\end{figure}
\end{example}

\section{Conclusion}
The convergence of the value function and feedback law approximations is implied under regularity conditions. The convergence rate depends on the dimensions of the state and input spaces and the complexity of the dynamics and distributions. Thus, this approach does not avoid the curse of dimensionality. However, the flexibility of the sampling process and the self-approximating forms mitigate some computational burden from implementing DP. This makes handling probabilistic constraints more natural. Higher sampling densities can be assigned to regions of greater importance, enabling adaptive gridding and more efficient resource use. This approach is a practical framework for lower-order stochastic nonlinear systems with probabilistic constraints.

\bibliographystyle{IEEEtran} 
\bibliography{References}
\end{document}